\def\th@plain{%
  \thm@notefont{}% same as heading font
  \itshape % body font
}
\def\th@definition{%
  \thm@notefont{}% same as heading font
  \normalfont % body font
}
\newtheorem{thm}{{\bf Theorem}}
\newtheorem{coro}[thm]{{\bf Corollary}}
\newtheorem{lem}[thm]{{\bf Lemma}}
\begin{document}
%
% paper title
%\title{Improved Sufficient Conditions for Sparse Phase Retrieval}
\title{New Conditions for Sparse Phase Retrieval}
\author{Mehmet~Ak\c{c}akaya,~\IEEEmembership{Member,~IEEE,}
        ~and~Vahid~Tarokh,~\IEEEmembership{Fellow,~IEEE}% <-this % stops a space
%\thanks{Manuscript received ...
        %This work was supported by ...}% <-this % stops a space
\thanks{M. Ak\c{c}akaya is
with Beth Israel Deaconess Medical Center, Harvard Medical School, Boston, MA, 02215. Vahid Tarokh is with the School of Engineering and Applied Sciences, Harvard University, Cambridge, MA, 02138. (e-mails: makcakay@bidmc.harvard.edu, vahid@seas.harvard.edu).  M. Ak\c{c}akaya would like to acknowledge grant support from NIH K99HL111410-01.}}
\markboth{DRAFT}{DRAFT}
\maketitle

\begin{abstract}
We consider the problem of sparse phase retrieval, where a $k$-sparse signal ${\bf x} \in {\mathbb R}^n \textrm{ (or } {\mathbb C}^n\textrm{)}$ is measured as ${\bf y} = |{\bf Ax}|,$ where ${\bf A} \in {\mathbb R}^{m \times n} \textrm{ (or } {\mathbb C}^{m \times n}\textrm{ respectively)}$ is a measurement matrix and $|\cdot|$ is the element-wise absolute value.
For a real signal and a real measurement matrix ${\bf A}$, we show that $m = 2k$ measurements are necessary and sufficient to recover ${\bf x}$ uniquely.
%, which interestingly match the results from compressed sensing where linear measurements are available, forming a necessary and sufficient condition.
For complex signal ${\bf x} \in {\mathbb C}^n$ and ${\bf A} \in {\mathbb C}^{m \times n}$, we show that $m = 4k-2$ phaseless measurements are sufficient to recover ${\bf x}$. It is known that the multiplying constant $4$ in $m = 4k-2$ cannot be improved.
\end{abstract}

\begin{keywords}
phase retrieval, sparse signals, uniqueness, sparse phase retrieval, compressed sensing
\end{keywords}
\IEEEpeerreviewmaketitle

\section{Introduction} \label{sec:intro}

\PARstart{L}{et} ${\bf x} \in {\mathbb H}^n,$ and ${\bf A} \in {\mathbb H}^{m \times n},$ where ${\mathbb H}$ is either the real field ${\mathbb R}$ or the complex field ${\mathbb C}.$  We use $| \cdot|$ to denote element-wise absolute value operation from ${\mathbb H}$ to ${\mathbb R}$.

Phase retrieval deals with the problem of estimating a signal from phaseless measurements,
\begin{equation} \label{meas-model}
 {\bf y} = |{\bf Ax}|.
 \end{equation}

This can be used to model problems in x-ray crystallography \cite{Millane}, diffractive imaging \cite{Bunk}, astronomical imaging \cite{Dainty} and medical imaging \cite{Zoroofi}, where the measurement matrix ${\bf A}$ is typically the Fourier matrix.
%In practice, such problems are ill-posed, and thus additional signal information \cite{Chen, Liu} in conjunction with alternating projection based reconstruction algorithms \cite{GS, Fienup} with ad hoc parameter tuning have been used for signal retrieval.
There exists some efforts in the literature for reconstruction of such signals \cite{Chen, Liu, GS, Fienup}.

For a general measurement matrix ${\bf A}$, sufficient conditions for unique phase retrieval were established \cite{Balan}. Balan et al have shown that $m \geq 2n - 1$ generic measurements are sufficient for unique retrieval (up to global phase) in ${\mathbb H} = {\mathbb R}$, and $m \geq 4n-2$ measurements suffice in ${\mathbb H} = {\mathbb C}$ \cite{Balan}. Furthermore, it was recently shown that semidefinite programming can be used to reconstruct signals if ${\bf A}$ is a Gaussian random matrix, as long as $m \geq c_0 n \log n$ for a sufficiently large constant $c_0$ \cite{CSV}. Alternative methods have also been explored for practical reconstruction with theoretical guarantees \cite{CESV, Mallat, Sanghavi}.

Recently, there has been an interest in sparse phase retrieval, where the number of non-zero coefficients (or $\ell_0$ norm) of ${\bf x}$, denoted $||{\bf x}||_0$, is much smaller than the dimensionality $n.$ This a-priori knowledge about the signal can be used to reduce the number of measurements in practice \cite{Moravec, Shechtman, Ohlsson}. Theoretical analysis of sparse phase retrieval has also been performed in certain scenarios. For instance, $O(k\log(n/k))$ measurements were shown to be sufficient for stable sparse phase retrieval over ${\mathbb R}$ \cite{Eldar-Mendelson}. This is the same order as in compressed sensing \cite{Candes, Donoho} where linear measurements with phase are available.

The conditions for exact sparse reconstruction, with no measurement noise, is well-understood in the context of compressed sensing/sparse approximation, where $m = 2k$ is necessary and sufficient \cite{Candes-Romberg-Tao-fft}, with polynomial-time reconstruction algorithms for Vandermonde-based measurement matrix designs (which also include partial Fourier transforms that sample only the low-frequency components) achieving this bound \cite{AkTar}. The corresponding bounds for exact sparse phase retrieval is a relatively new area of research. In a paper that describes a computationally feasible algorithm for sparse phase retrieval \cite{Voroninski}, the results of \cite{Balan} were applied in the sparse case for a discussion of injectivity. This straightforward extension, used for illustrative purposes of injectivity, implied $4k-1$ measurements were sufficient for unique sparse phase retrieval of a $k$-sparse signal in ${\mathbb R}$, and $8k-2$ were sufficient for a $k$-sparse signal in ${\mathbb C}$. The real case was also characterized in \cite{Eldar-pr} with the same result.

In this note, we study the exact sparse phase retrieval problem for $k$-sparse ${\bf x} \in {\mathbb H}^n.$ For ${\mathbb H} = {\mathbb R},$ we show that $m = 2k$ measurements are sufficient to recover every $k$-sparse signal from phaseless measurements, for ${\bf A},$ whose rows are a generic choice of vectors in ${\mathbb R}^n$. In conjunction with the results from compressed sensing, this forms a necessary and sufficient condition. For ${\mathbb H} = {\mathbb C}$, we show that $m = 4k-2$ measurements are sufficient to recover every $k$-sparse signal from phaseless measurements if the rows of ${\bf A}$ are a generic choice of vectors in ${\mathbb C}^n$. The outline of the paper is given next. We state our results for the real and complex cases in Section \ref{sec:results}. The proofs are given in Section \ref{sec:proof}, where we use a combinatorial approach to extend our previous coding theory-based work \cite{AkTar} for the real case, and we extend the proof technique of \cite{Balan} in the complex case. We make our conclusions and provide directions for future research in Section \ref{sec:conc}.

%%%%%%%%%%%%%%%%%%
%%% SECTION II %%%
%%%%%%%%%%%%%%%%%%

\section{Problem Statement and Main Results} \label{sec:results}

%Let ${\bf x} \in {\mathbb H}^n,$ and ${\bf A} \in {\mathbb H}^{m \times n},$ where ${\mathbb H}$ is either the real field ${\mathbb R}$ or the complex field ${\mathbb C}.$ Let the $\ell_0$ norm of ${\bf x}$ be $||{\bf x}||_0 = \textrm{card}(\{i: x_i \neq 0\}) = k,$ where $ \textrm{card}(\cdot)$ is the cardinality of the set. We use $| \cdot|$ to denote element-wise absolute value operation from ${\mathbb H}$ to ${\mathbb R}$.

%The phaseless measurement system outputs the observations
%\begin{equation} \label{meas-model}
% {\bf y} = |{\bf Ax}|.
% \end{equation}
In sparse phase retrieval, the reconstructor solves
\begin{equation} \label{opt-ell0}
 \min ||{\bf x}||_0 \:\:\: \textrm{s. t.} \:\:\:  {\bf y} = |{\bf Ax}|.
 \end{equation}

We aim to characterize the number of sufficient measurements, $m$ in terms of the sparsity $k$ (and possibly the dimensionality of the sparse signal $n$) for which there is a unique solution to the optimization problem in (\ref{opt-ell0}), up to global phase. We state our results separately for the real and complex cases.
\begin{thm} For ${\bf A} \in {\mathbb R}^{m \times n},$ whose rows are a generic choice of vectors in ${\mathbb R}^{n},$ $m \geq 2k$  measurements are sufficient to guarantee unique phase retrieval for any $k$-sparse signal ${\bf x} \in {\mathbb R}^n.$ %Moreover, any ${\bf A}$ whose entries are chosen i.i.d. from a continuous distribution qualifies for such a phaseless measurement system with probability 1.
 \end{thm}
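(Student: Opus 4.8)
The plan is to reduce the claim to a linear-algebraic rank condition on structured submatrices of $\mathbf{A}$ and then verify that condition for generic rows. Uniqueness up to global phase in the real case means uniqueness up to a global sign, so I must show that for generic $\mathbf{A}$ there are no $k$-sparse $\mathbf{x},\mathbf{x}' \in \mathbb{R}^n$ with $\mathbf{x} \neq \pm\mathbf{x}'$ and $|\mathbf{A}\mathbf{x}| = |\mathbf{A}\mathbf{x}'|$. The latter holds exactly when, for every row $\mathbf{a}_i$, one has $\mathbf{a}_i^\top\mathbf{x} = \pm\,\mathbf{a}_i^\top\mathbf{x}'$. First I would split $\{1,\dots,m\}$ into the index set $I_+$ on which the sign is $+$ and $I_-$ on which it is $-$ (indices with $\mathbf{a}_i^\top\mathbf{x}=0$ assigned arbitrarily). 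Writing $S=\mathrm{supp}(\mathbf{x})$ and $S'=\mathrm{supp}(\mathbf{x}')$ with $|S|,|S'|\le k$, the two families of equations read $\mathbf{a}_i^\top\mathbf{x}=\mathbf{a}_i^\top\mathbf{x}'$ for $i\in I_+$ and $\mathbf{a}_i^\top\mathbf{x}=-\mathbf{a}_i^\top\mathbf{x}'$ for $i\in I_-$. Restricting to the active coordinates, this says precisely that $(\mathbf{x}_S,\mathbf{x}'_{S'})$ lies in the kernel of the structured $m\times(|S|+|S'|)$ matrix $\mathbf{M}(S,S',I_+,I_-)$ whose rows indexed by $I_+$ form the block $[\,\mathbf{A}_{I_+,S}\ \ -\mathbf{A}_{I_+,S'}\,]$ and whose rows indexed by $I_-$ form $[\,\mathbf{A}_{I_-,S}\ \ \ \mathbf{A}_{I_-,S'}\,]$.

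The decisive point is the dimension bookkeeping: $\mathbf{M}$ always uses all $m=2k$ rows, while its number of columns is $|S|+|S'|\le k+k=2k$. Recombining the two sign families into this single kernel problem is exactly what collapses the effective number of unknowns from the $\approx 4k$ one gets by treating $\mathbf{x}+\mathbf{x}'$ and $\mathbf{x}-\mathbf{x}'$ independently (which yields only the known $4k-2$ bound) down to at most $2k$, matching the available measurements. Hence it suffices to prove that, for generic $\mathbf{A}$, the kernel of $\mathbf{M}(S,S',I_+,I_-)$ contains no vector corresponding to a genuinely distinct pair: any kernel vector must correspond to $\mathbf{x}'=\pm\mathbf{x}$, which already forces $S=S'$, so the kernel is trivial whenever $S\neq S'$ and is confined to the global-sign locus when $S=S'$. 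Since there are only finitely many support pairs $S,S'$ and partitions $(I_+,I_-)$, a union bound then lets me discard a finite union of proper algebraic subvarieties, i.e. a measure-zero (non-generic) set of matrices $\mathbf{A}$, and conclude.

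The main work, and the principal obstacle, is to show that each structured $\mathbf{M}(S,S',I_+,I_-)$ attains the required rank for generic $\mathbf{A}$. This is not automatic, because $\mathbf{M}$ is far from a generic $m\times(|S|+|S'|)$ matrix: the same entries of $\mathbf{A}$ recur across the two column blocks when $S$ and $S'$ overlap, the sign pattern is rigidly tied to the partition, and certain partitions deliberately reproduce the legitimate global-sign dependencies that must be retained rather than eliminated. I would establish the needed rank by exhibiting one explicit $\mathbf{A}$ — taking the rows from a Vandermonde/coding-theoretic design as in \cite{AkTar} — for which the relevant maximal minors of $\mathbf{M}$ are computably nonzero; since each such minor is a polynomial in the entries of $\mathbf{A}$, nonvanishing at a single point forces nonvanishing off a proper subvariety, which is precisely the genericity statement. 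Verifying the nonvanishing of these sign-structured determinants for every partition and every support overlap, while cleanly separating the admissible global-sign kernel from any spurious kernel vector, is the combinatorial heart of the argument and is where the coding-theoretic technique of \cite{AkTar} must be extended to the phaseless setting.
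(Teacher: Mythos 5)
Your setup coincides with the paper's own reduction: encode the unknown signs as a partition $(I_+,I_-)$ (the paper's diagonal matrix ${\bf P} \in {\cal P}({\mathbb R})$ with row sets ${\cal R}_{1},{\cal R}_{-1}$), observe that a collision $|{\bf A}{\bf x}|=|{\bf A}{\bf x}'|$ places $({\bf x}_S,{\bf x}'_{S'})$ in the kernel of the sign-structured matrix $\big[{\bf A}_S , {\bf P}{\bf A}_{S'}\big]$, and union-bound over the finitely many supports and sign patterns; the paper packages exactly this as Lemma \ref{lem31} via the phase-generalized minimum distance. Your remark that when $S=S'$ the kernel cannot be trivial and must instead be pinned to the global-sign locus is the right subtlety to flag.

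The gap is that everything you label ``the combinatorial heart'' and postpone \emph{is} the proof: the paper's Sections \ref{sec231}--\ref{sec233} are devoted entirely to it. There, (i) disjoint supports are handled by a direct genericity/rank-$2k$ argument; (ii) full overlap admits no minor argument at all, and is handled by intersecting the null space $\{({\bf u},-{\bf u})\}$ coming from the $+1$ rows with the null space $\{({\bf u},{\bf u})\}$ coming from the $-1$ rows; (iii) partial overlap $|S\cap S'|=w$ is handled by a Schur-complement elimination of the $2w$ repeated columns followed by a generic-rank claim for the residual $(2k-2w)$-column block. Your proposed substitute --- certify nonvanishing of the relevant minors at a single explicit Vandermonde point and invoke polynomial rigidity --- is not carried out, and it is not clear it can be in the form stated. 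First, when the supports overlap and the partition is unbalanced, the structured matrix is genuinely rank-deficient for \emph{every} ${\bf A}$: its kernel contains pairs with ${\bf x}'=\pm{\bf x}$ and ${\bf x}$ annihilated by the rows of one sign class, so the correct target is not ``some maximal minor is nonzero'' but ``the rank equals a prescribed deficient value depending on $(w,|I_+|)$ and the kernel equals the admissible locus'' --- deriving those values and loci is precisely the case analysis you skipped. Second, even in the clean disjoint-support case, the full-spark property of the Vandermonde frames of \cite{AkTar} does not survive multiplying a subset of rows by $-1$; nonvanishing of $\det\big[{\bf A}_S , {\bf P}{\bf A}_{S'}\big]$ for all sign patterns ${\bf P}$ is a new claim that needs its own argument (the paper avoids this entirely by arguing genericity directly rather than specializing to a designed matrix). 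As it stands, you have a correct framing with the decisive steps missing, not a proof of the theorem.
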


Here, a generic choice of frame vectors indicate a dense Zariski-open set \cite{Balan}. Interestingly, the number of sufficient measurements for unique sparse phase retrieval from phaseless measurements
in the real case match the number of necessary and sufficient measurements for unique sparse signal recovery from linear measurements
%in compressed sensing
\cite{Candes-Romberg-Tao-fft,AkTar}. This leads to the following corollary:
\begin{coro} To guarantee the unique sparse phase retrieval for every $k$-sparse ${\bf x} \in {\mathbb R}^n$, $m = 2k$ measurements are necessary and sufficient.
\end{coro}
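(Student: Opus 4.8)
The plan is to read off sufficiency directly from Theorem~1 and to obtain necessity from the fact that phaseless measurements carry no more information than the underlying linear measurements, so that the classical compressed-sensing lower bound transfers verbatim. Setting $m = 2k$ in Theorem~1 already gives that a generic ${\bf A} \in {\mathbb R}^{2k \times n}$ recovers every $k$-sparse ${\bf x} \in {\mathbb R}^n$ uniquely up to global sign, which is the sufficiency half; the real work lies entirely in necessity, which is the content cited as \cite{Candes-Romberg-Tao-fft,AkTar} for the linear case and which I would re-derive here in the phaseless setting.

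For necessity I would argue by contraposition, showing that no ${\bf A} \in {\mathbb R}^{m \times n}$ with $m \le 2k-1$ can separate all $k$-sparse signals up to global sign. The driving observation is the trivial implication that ${\bf A}{\bf x}_1 = {\bf A}{\bf x}_2$ forces $|{\bf A}{\bf x}_1| = |{\bf A}{\bf x}_2|$, so every collision of the linear map is automatically a collision of the phaseless map. Since $m \le 2k-1$ forces $\mathrm{rank}({\bf A}) \le 2k-1$ and hence $\mathrm{spark}({\bf A}) \le 2k$, there is a nonzero ${\bf z}$ in the null space of ${\bf A}$ supported on at most $2k$ coordinates. Partitioning its support into two disjoint blocks of size at most $k$ lets me write ${\bf z} = {\bf x}_1 - {\bf x}_2$ with ${\bf x}_1,{\bf x}_2$ both $k$-sparse; then ${\bf A}{\bf x}_1 = {\bf A}{\bf x}_2$ and therefore $|{\bf A}{\bf x}_1| = |{\bf A}{\bf x}_2|$.

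The one step that needs care---and the main obstacle---is verifying that the constructed pair is genuinely non-unique in the phase-retrieval sense, i.e.\ that ${\bf x}_1 \neq \pm{\bf x}_2$, since signals differing by a global sign are identified and must not be counted as distinct. When the support of ${\bf z}$ has size at least two I would choose both blocks nonempty, so that ${\bf x}_1$ and ${\bf x}_2$ are nonzero with disjoint supports and can therefore be neither equal nor negatives of one another; this exhibits two inequivalent $k$-sparse signals with identical phaseless measurements and contradicts unique recovery. The only remaining possibility is that every sparse null vector is supported on a single coordinate, which means ${\bf A}$ has a zero column, say the $j$-th; then ${\bf e}_j$ and ${\bf 0}$ are distinct, non-sign-equivalent, and satisfy $|{\bf A}{\bf e}_j| = |{\bf A}{\bf 0}| = {\bf 0}$, giving the same contradiction. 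Hence $m \ge 2k$ is necessary, and combined with the sufficiency supplied by Theorem~1 this yields that $m = 2k$ is necessary and sufficient, as claimed.
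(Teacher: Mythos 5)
Your proof is correct and takes essentially the paper's route: sufficiency is read directly from Theorem~1 with $m=2k$, and necessity is the compressed-sensing lower bound for linear measurements, which the paper merely cites \cite{Candes-Romberg-Tao-fft,AkTar} and transfers implicitly to the phaseless setting. What you add is the explicit spark-based collision construction and, importantly, the verification that the colliding pair is not sign-equivalent (a linear collision with ${\bf x}_1 = -{\bf x}_2$ would not by itself contradict phaseless uniqueness up to global sign) --- a subtlety the paper never addresses; note only that your step $\mathrm{rank}({\bf A}) \leq 2k-1 \Rightarrow \mathrm{spark}({\bf A}) \leq 2k$ implicitly assumes $n \geq 2k$, which is the same sparse regime ($m < n$, $k \ll n$) the paper works in throughout.
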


%In the complex case, this bound no longer holds. We have the following result.
In the complex case, we have the following result:
\begin{thm} For ${\bf A} \in {\mathbb C}^{m \times n},$ whose rows are a generic choice of vectors in ${\mathbb C}^{n},$ $m \geq 4k-2$ measurements are sufficient to guarantee unique phase retrieval for any $k$-sparse signal ${\bf x} \in {\mathbb C}^n.$ %Moreover, any ${\bf A}$ whose entries are i.i.d. Gaussian qualifies for such a phaseless measurement system with probability 1.
 \end{thm}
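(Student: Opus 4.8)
The plan is to recast the uniqueness claim as an injectivity-up-to-global-phase statement and then run a dimension count over the relevant algebraic varieties, mirroring \cite{Balan} but with all signal supports constrained to have size at most $k$. First I would note that it suffices to prove the phaseless map $\mathbf{x}\mapsto|\mathbf{Ax}|$ is injective modulo global phase on the set of $k$-sparse vectors: the true signal is feasible in (\ref{opt-ell0}) with $\|\mathbf{x}\|_0\le k$, so any minimizer $\mathbf{z}$ satisfies $\|\mathbf{z}\|_0\le k$ as well, and injectivity on $k$-sparse signals then forces $\mathbf{z}=e^{i\theta}\mathbf{x}$. Writing the $i$-th phaseless measurement as $|\langle\mathbf{a}_i,\mathbf{x}\rangle|$ with $\mathbf{a}_i$ the (conjugated) $i$-th row of $\mathbf{A}$, two $k$-sparse signals $\mathbf{x},\mathbf{y}$ collide exactly when the Hermitian matrix $\mathbf{Q}=\mathbf{x}\mathbf{x}^*-\mathbf{y}\mathbf{y}^*$ satisfies $\mathbf{a}_i^*\mathbf{Q}\,\mathbf{a}_i=\mathrm{tr}(\mathbf{a}_i\mathbf{a}_i^*\mathbf{Q})=0$ for all $i$, and they are phase-equivalent precisely when $\mathbf{Q}=0$. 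The theorem thus reduces to showing that for generic $\mathbf{A}$ no nonzero $\mathbf{Q}$ of this special form lies in the common zero set of the functionals $\mathbf{Q}\mapsto\mathbf{a}_i^*\mathbf{Q}\,\mathbf{a}_i$.

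The second step is a dimension count performed one pair of supports at a time. Fix supports $S_1,S_2$ with $|S_1|,|S_2|\le k$; every relevant $\mathbf{Q}$ is supported on the $(S_1\cup S_2)\times(S_1\cup S_2)$ block. I would introduce the cone
\begin{equation}
\mathcal{Q}_{S_1,S_2}=\{\mathbf{x}\mathbf{x}^*-\mathbf{y}\mathbf{y}^*:\mathrm{supp}(\mathbf{x})\subseteq S_1,\ \mathrm{supp}(\mathbf{y})\subseteq S_2\}\setminus\{0\},
\end{equation}
and compute its dimension from the parametrization $(\mathbf{x},\mathbf{y})\mapsto\mathbf{x}\mathbf{x}^*-\mathbf{y}\mathbf{y}^*$. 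Its domain has real dimension $2|S_1|+2|S_2|\le 4k$, while the generic fiber is two-dimensional because the value is unchanged under the two independent rotations $\mathbf{x}\mapsto e^{i\alpha}\mathbf{x}$, $\mathbf{y}\mapsto e^{i\beta}\mathbf{y}$; hence $\dim\mathcal{Q}_{S_1,S_2}=4k-2$. This is exactly the point that beats the naive $8k-2$ bound: a collision between two $k$-sparse signals involves two vectors each confined to its own $k$-dimensional support, so the pair has only $4k$ degrees of freedom, rather than the $8k$ one would assign by treating the combined $2k$-element support as a single dense signal. The crucial structural feature is that the measurement constraints are \emph{linear} in $\mathbf{Q}$, so a collision on this support pair is a nonzero point of $\mathcal{Q}_{S_1,S_2}$ lying in the subspace cut out by the $m$ rank-one functionals $\mathbf{a}_i\mathbf{a}_i^*$.

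The heart of the argument is to show this intersection is generically empty. I would form the incidence variety $I=\{(\mathbf{A},[\mathbf{Q}]):[\mathbf{Q}]\in\mathbb{P}\mathcal{Q}_{S_1,S_2},\ \mathbf{a}_i^*\mathbf{Q}\,\mathbf{a}_i=0\ \forall i\}$ and bound the dimension of its projection to the space of measurement matrices. A generic point of $\mathcal{Q}_{S_1,S_2}$ is indefinite of signature $(1,1)$, since generic $\mathbf{x},\mathbf{y}$ are independent, so for fixed such $[\mathbf{Q}]$ the real quadric $\{\mathbf{a}:\mathbf{a}^*\mathbf{Q}\,\mathbf{a}=0\}$ has real codimension one in $\mathbb{R}^{2n}$; imposing it on all $m$ rows gives a fiber of dimension $m(2n-1)$. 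Since $\dim\mathbb{P}\mathcal{Q}_{S_1,S_2}=4k-3$, the projection of $I$ has dimension at most $(4k-3)+m(2n-1)=2nm+(4k-3-m)$, which is strictly less than the dimension $2nm$ of the space of matrices exactly when $m\ge 4k-2$. The bad set of $\mathbf{A}$ for this support pair is therefore contained in a proper Zariski-closed subvariety; taking the union over the finitely many pairs $(S_1,S_2)$ and intersecting the corresponding dense Zariski-open complements yields the desired generic condition.

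I expect the main obstacle to be the transversality bookkeeping in the last step: the cutting functionals are not arbitrary but are forced to be the rank-one matrices $\mathbf{a}_i\mathbf{a}_i^*$ arising from the Veronese-type map $\mathbf{a}\mapsto\mathbf{a}\mathbf{a}^*$, and one must confirm they still reduce $\mathcal{Q}_{S_1,S_2}$ to nothing once $m\ge 4k-2$. The incidence-variety fiber count above is precisely what circumvents this, since it measures the codimension of the quadric $\{\mathbf{a}:\mathbf{a}^*\mathbf{Q}\,\mathbf{a}=0\}$ directly rather than assuming genericity of the cutting subspace; the remaining care is at the degenerate strata where $\mathbf{Q}$ drops to rank one, where the quadric has codimension two and which must be checked separately to confirm they contribute only lower-dimensional, hence still negligible, bad sets. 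This is the sparse-support analogue of \cite{Balan}, where the identical count with the combined support replaced by the full index set $\{1,\dots,n\}$ reproduces the non-sparse threshold $4n-2$.
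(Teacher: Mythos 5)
Your proposal is correct, and it takes a genuinely different route from the paper. You work in the lifted domain: a collision between $k$-sparse $\mathbf{x}$ and $\mathbf{y}$ becomes a nonzero Hermitian matrix $\mathbf{Q}=\mathbf{x}\mathbf{x}^*-\mathbf{y}\mathbf{y}^*$ annihilated by the rank-one functionals $\mathbf{a}_i\mathbf{a}_i^*$, and you kill the bad set of matrices by an incidence-variety dimension count over the projectivized cone of such $\mathbf{Q}$ --- essentially the injectivity formalism of \cite{Bandeira} and \cite{Conca} restricted to sparse supports. The paper instead follows \cite{Balan} directly: it parametrizes the two candidate column spans as a pair of planes $({\cal V},{\cal W})$ in $Gr(k,m)^{\mathbb C}\times Gr(k,m)^{\mathbb C}$, introduces explicit coordinates $r_2,\dots,r_k$ and phases $p_{22},\dots,p_{kk}$, and shows that the $m-k$ modulus constraints cut the set of plane pairs admitting a collision (the ``distinct-phaseless-mapping property'') down to real dimension $4k(m-k)+4k-3-m$, which is deficient in the $4k(m-k)$-dimensional product of Grassmannians once $m\ge 4k-2$. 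The two counts are the same in spirit (your net $4k-3-m<0$ is exactly the paper's), but your version has a concrete advantage: it treats disjoint, partially overlapping, and identical supports in one uniform argument, whereas the paper's complex-case proof is set up only for ${\cal I}\cap{\cal J}=\emptyset$ and delegates the overlapping configurations to a closing union-bound sentence. The price is that you must stratify by rank and check the degenerate rank-one stratum separately, which you correctly flag: there the quadric $\{\mathbf{a}:\mathbf{a}^*\mathbf{Q}\mathbf{a}=0\}$ has real codimension two, but the stratum itself has projective dimension only $2k-2$, so that branch of the count closes for any $m\ge k$.

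One small inaccuracy, harmless because it errs in the safe direction: your claim that the generic fiber of $(\mathbf{x},\mathbf{y})\mapsto\mathbf{x}\mathbf{x}^*-\mathbf{y}\mathbf{y}^*$ is exactly two-dimensional fails when $S_1=S_2$. Writing $\mathbf{Q}=[\mathbf{x},\mathbf{y}]\,\mathrm{diag}(1,-1)\,[\mathbf{x},\mathbf{y}]^*$, any other decomposition is $[\mathbf{x},\mathbf{y}]\mathbf{G}$ with $\mathbf{G}$ in the group $U(1,1)$, so over a signature-$(1,1)$ point the fiber is four-dimensional when both vectors range over the same support; only when $S_1\ne S_2$ does the support constraint force $\mathbf{G}$ to be diagonal, reducing the fiber to the two-torus of phases. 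Hence for identical supports $\dim{\cal Q}_{S_1,S_2}=4k-4$ rather than $4k-2$. Since every downstream step uses only the upper bound $\dim{\mathbb P}{\cal Q}_{S_1,S_2}\le 4k-3$, the argument is unaffected, but the fiber statement should be corrected or stated as an inequality.
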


%%Here, we use the definition from \cite{Balan}, where generic means an open dense subset of the set of all $m$-element frames in ${\mathbb C}^n$.
%
%%Note that in both the real and complex case, one additional measurement is needed (in terms of the number of non-zero elements of a signal) when compared to the non-sparse phase retrieval problem \cite{Balan}. %This is in contrast to the existing sufficiency results in the sparse phase retrieval literature \cite{Voroninski, Eldar-pr}, which require almost \emph{double} the number of measurements (in terms of the number of non-zero elements of a signal) when compared to the non-sparse problem. Thus,
%%We note that our results improve the existing bounds by approximately a factor of 2.

We note that the bound in the complex case is analogous (in terms of the number of non-zero elements) to the non-sparse phase retrieval problem \cite{Balan}, whereas in the real case, one additional measurement is needed.

%%%%%%%%%%%%%%%%%%%
%%% SECTION III %%%
%%%%%%%%%%%%%%%%%%%

\section{Proof of Main Results} \label{sec:proof}
\subsection{Notation}
We define ${\mathbb T}(\mathbb H) = \{x \in {\mathbb H} : |x| = 1\}.$ The space of admissible diagonal phase matrices is defined as
\begin{align}
{\cal P}({\mathbb H}) = &\{ {\bf P} \in ({\mathbb T}(\mathbb H) \cup \{0\})^{m \times m} : \nonumber \\
		& p_{ij} = 0 \:\: \forall i \neq j \textrm{ and } p_{ii} \neq p_{jj} \textrm{ for some }i \neq j \nonumber \\
		& \textrm{and } |p_{ii}| = 1 \:\: \forall i \} \nonumber
\end{align}
where $p_{ij}$ is the $(i,j)^{\textrm{th}}$ element of ${\bf P}.$ Note the condition $p_{ii} \neq p_{jj} \textrm{ for some }i \neq j,$ ensures any admissible phase matrix is not a multiple of the identity matrix, which would only change the global phase.

Let ${\bf a}_k$ be the $k^{\textrm{th}}$ column of ${\bf A}$, and ${\bf a}^{(k)}$ be the $k^{\textrm{th}}$ row. We define ${\bf A}_{\cal J}$ to be matrix whose columns are $\{{\bf a}_j: j \in {\cal J}\}.$ Similarly we define ${\bf A}^{({\cal J})}$ to be the matrix whose rows are $\{{\bf a}^{(j)}: j \in {\cal J}\}.$ We let $[l] = \{1, 2, \dots, l\}$ for any positive integer $l.$ For any matrix ${\bf B},$ let ${\bf B}^T$, ${\bf B}^*$ and ${\cal N}({\bf B})$ denote the transpose, conjugate transpose and right null space of ${\bf B}$. $|\cdot|$ denotes element-wise absolute value for a vector as in Section \ref{sec:intro}, and also denotes cardinality for an index set.

%Finally, we let $(\cdot)^T$ denote the transpose of $(\cdot)$, $(\cdot)^*$ denote its conjugate transpose, and ${\cal N}(\cdot)$ denote the right null space of $(\cdot)$.

%%%%%%%%%%%%%%%%%
%%% Real Case %%%
%%%%%%%%%%%%%%%%%
\subsection{Proof for the Real Case} \label{sub:real-proof}

We extend the coding theory-based proof technique from \cite{AkTar}. To this end, we define the phase-generalized minimum distance of a matrix as follows:

\textbf{Definition.} For any ${\bf C} \in {\mathbb R}^{m \times n}$ with $m<n$, we define the phase-generalized minimum distance as the smallest integer greater than
$$ \min_{\substack{{\cal I}, {\cal J}: \\ |{\cal I}| + |{\cal J}| \leq m}} \min_{{\bf P} \in {\cal P}({\mathbb R})} \textrm{rank}\Big( \big[{\bf C}_{\cal I} , {\bf P}{\bf C}_{\cal J}\big]\Big).$$
%the sum of the cardinality of the index sets ${\cal I}, {\cal J} \subseteq [n]$, which are of the same cardinality, such that there exist non-zero ${\bf u}_1$ and ${\bf u}_2$ that satisfy
%$$\big[{\bf C}_{\cal I} , {\bf P}{\bf C}_{\cal J}\big] \left[ \begin{array}{cc}
%{\bf u}_1\\
%{\bf u}_2 \end{array} \right]\  = {\bf 0},$$
%for some ${\bf P} \in {\cal P}({\mathbb R}).$

By rank considerations, the phase-generalized minimum distance of any $m \times n$ matrix is $\leq m + 1.$ Now we state a necessary and sufficient condition for recovery in this case:
\begin{lem} \label{lem31}
Let ${\bf A} \in {\mathbb R}^{m \times n}$ with $m<n$ have phase-generalized minimum distance $d.$ Then $\ell_0$ minimization can uniquely recover any vector ${\bf x}$ from phaseless measurements ${\bf y} = |{\bf Ax}|$ if $||{\bf x}||_0 \leq \lfloor(d-1)/2\rfloor.$ Furthermore, this condition is necessary. %unique recovery is no longer possible if $||{\bf x}||_0 \geq \lfloor(d-1)/2\rfloor + 1.$
\end{lem}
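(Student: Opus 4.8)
The plan is to translate unique $\ell_0$ recovery (up to global phase) into a rank condition on the phase--modulated submatrices $[{\bf A}_{\cal I},{\bf P}{\bf A}_{\cal J}]$ that appear in the definition of $d$, following the coding--theoretic argument of \cite{AkTar} but now carrying the diagonal phase matrix ${\bf P}$ throughout. Write $k=\lfloor(d-1)/2\rfloor$ for brevity.

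First I would establish the collision characterization. For real signals, $|{\bf A}{\bf x}|=|{\bf A}{\bf x}'|$ holds if and only if each entry of ${\bf A}{\bf x}$ equals $\pm$ the corresponding entry of ${\bf A}{\bf x}'$, i.e.\ ${\bf A}{\bf x}={\bf P}{\bf A}{\bf x}'$ for a diagonal $\pm1$ matrix ${\bf P}$, with the sign free on rows where both sides vanish. Restricting to ${\cal I}=\mathrm{supp}({\bf x})$ and ${\cal J}=\mathrm{supp}({\bf x}')$, this reads $[{\bf A}_{\cal I},{\bf P}{\bf A}_{\cal J}]\,({\bf x}_{\cal I},-{\bf x}'_{\cal J})^{T}=0$, so a pair of indistinguishable signals corresponds exactly to a nontrivial element of ${\cal N}([{\bf A}_{\cal I},{\bf P}{\bf A}_{\cal J}])$, i.e.\ a rank deficiency. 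The permitted global--phase ambiguity ${\bf x}'=\pm{\bf x}$ is precisely the case ${\bf P}=\pm{\bf I}$; this is why ${\cal P}({\mathbb R})$ forbids multiples of the identity, so that ranging ${\bf P}$ over ${\cal P}({\mathbb R})$ isolates the genuine (beyond global phase) ambiguities.

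For sufficiency, suppose $\|{\bf x}\|_0\le k$ and, for contradiction, that some ${\bf x}'$ not equal to ${\bf x}$ up to global phase satisfies $\|{\bf x}'\|_0\le k$ and $|{\bf A}{\bf x}'|=|{\bf A}{\bf x}|$. Then $|{\cal I}|+|{\cal J}|\le 2k=2\lfloor(d-1)/2\rfloor\le d-1<d$. Since $d$ is by definition the smallest total support size at which a phased concatenation can drop full column rank, every such $[{\bf A}_{\cal I},{\bf P}{\bf A}_{\cal J}]$ with ${\bf P}\in{\cal P}({\mathbb R})$ has trivial null space, forcing ${\bf x}={\bf x}'=\mathbf{0}$, a contradiction. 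Hence the sparsest feasible point is unique up to global phase, which is what (\ref{opt-ell0}) returns. For necessity, I would take a critical configuration realizing $d$: supports with $|{\cal I}|+|{\cal J}|=d$ and ${\bf P}\in{\cal P}({\mathbb R})$ for which $[{\bf A}_{\cal I},{\bf P}{\bf A}_{\cal J}]$ has a nontrivial null vector $({\bf u},{\bf v})$. Reading off ${\bf x}$ (entries ${\bf u}$ on ${\cal I}$) and ${\bf x}'$ (entries $-{\bf v}$ on ${\cal J}$), after splitting the $d$ critical columns as evenly as possible, produces two signals with $|{\bf A}{\bf x}|=|{\bf A}{\bf x}'|$, each of sparsity at most $\lceil d/2\rceil=k+1$, distinct up to global phase because ${\bf P}$ is not a multiple of the identity. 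Thus recovery can already fail at sparsity $k+1$, so the bound cannot be relaxed.

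The main obstacle is the phase bookkeeping rather than the linear algebra. Three points require care: (i) showing the two constructed signals are genuinely distinct modulo a single global sign and not a spurious artifact, which is exactly where the clause $p_{ii}\ne p_{jj}$ of ${\cal P}({\mathbb R})$ is used; (ii) rows on which ${\bf A}{\bf x}$ vanishes, where the corresponding diagonal entry of ${\bf P}$ is unconstrained and must be fixed consistently; and (iii) in the necessity step, arranging the split of the $d$ critical columns so that both signals achieve sparsity exactly $k+1$. I must also keep the ordinary linear collisions (${\bf P}=\pm{\bf I}$) in view, since these are not themselves elements of ${\cal P}({\mathbb R})$: the interplay is mediated precisely by the zero--valued rows of point (ii), through which a linear collision can be re-expressed with a non-identity ${\bf P}\in{\cal P}({\mathbb R})$. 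For the generic matrices of the main theorems this issue is vacuous because the usual spark/injectivity requirement holds automatically and yields $d=m+1$, but verifying it cleanly in the general statement of Lemma~\ref{lem31} is the delicate part.
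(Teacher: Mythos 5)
Your proposal follows essentially the same route as the paper's proof: collisions $|{\bf Ax}|=|{\bf Az}|$ are rewritten as ${\bf A}_{\cal I}{\bf x}_{\cal I}={\bf P}{\bf A}_{\cal J}{\bf z}_{\cal J}$ and hence as a null vector of $[{\bf A}_{\cal I},{\bf P}{\bf A}_{\cal J}]$, sufficiency follows because every phased concatenation on fewer than $d$ columns has full column rank, and necessity is witnessed by a rank-deficient configuration with $d$ columns split between the two supports. The subtlety you flag --- that pure linear collisions (${\bf P}=\pm{\bf I}$) are excluded from ${\cal P}({\mathbb R})$ and hence not controlled by $d$ --- is genuine, but the paper's own proof silently glosses over it (asserting ${\bf r}={\bf Ax}={\bf PAz}$ with ${\bf P}\in{\cal P}({\mathbb R})$ without justification), so your treatment is, if anything, the more careful of the two.
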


\begin{proof} Suppose there are two distinct solutions ${\bf x}$ and ${\bf z}$ such that $||{\bf x}||_0 = ||{\bf z}||_0$ (since both can be picked by $\ell_0$ minimization) and
$${\bf y} = |{\bf Ax}| = |{\bf Az}|.$$
In other words
\begin{equation} \label{eq1-proof1}
{\bf r} = {\bf Ax} = {\bf PAz},
\end{equation}
for some ${\bf P} \in {\cal P}({\mathbb R}).$ Let ${\cal I}$ and ${\cal J}$ be the supports of ${\bf x}$ and ${\bf z}$ respectively. Then Equation (\ref{eq1-proof1}) can be re-written as:
\begin{equation}
\big [{\bf A}_{\cal I} , {\bf P}{\bf A}_{\cal J}\big]   \left[ \begin{array}{cc}
{\bf x}_{\cal I}\\
{\bf z}_{\cal J}  \end{array} \right]\  = {\bf 0}.
\end{equation}

If $||{\bf x}_{\cal I}||_0 = ||{\bf z}_{\cal J}||_0 \leq \lfloor(d-1)/2\rfloor,$ then the vector $[{\bf x}_{\cal I}^T, \: {\bf z}_{\cal J}^T]^T$ has dimensionality $\leq 2 \lfloor(d-1)/2\rfloor < d-1.$ By the definition of phase-generalized minimum distance, any such $[{\bf A}_{\cal I} , {\bf P}{\bf A}_{\cal J}\big] $ has full rank, implying the vector on the right should be ${\bf 0},$ leading to a contradiction.

However, if $||{\bf x}_{\cal I}||_0 = ||{\bf z}_{\cal J}||_0 \geq \lfloor(d-1)/2\rfloor + 1,$ then $[{\bf x}_{\cal I}^T, \: {\bf z}_{\cal J}^T]^T$ has dimensionality $\geq 2 (\lfloor(d-1)/2\rfloor + 1) \geq d.$ Since the rank of $[{\bf A}_{\cal I} , {\bf P}{\bf A}_{\cal J}\big] $ is at most $d-1,$ this implies there are infinitely many solutions for $[{\bf x}_{\cal I}^T, \: {\bf z}_{\cal J}^T]^T$, violating uniqueness.
\end{proof}

We next use a combinatorial approach to show if the rows of ${\bf A}$ are chosen generically in ${\mathbb R}^n$, then it has phase-generalized minimum distance $2k + 1$ if $m = 2k.$ We look at different cases based on the possible overlap between the columns of ${\bf A}$, corresponding to the supports of distinct sparse vectors as solutions to the optimization problem in (\ref{opt-ell0}):

\subsubsection{No Support Overlap} \label{sec231}
Let ${\cal I}, {\cal J} \subset [n]$ be two index sets of cardinality $k$ each such that ${\cal I} \cap {\cal J} = \emptyset$. Consider the $m \times 2k$ matrix
$${\bf M}({\cal I}, {\cal J}, {\mathbf P}) = \big[{\bf A}_{\cal I} , {\bf P}{\bf A}_{\cal J}\big],$$
for some ${\bf P} \in {\cal P}({\mathbb R}).$ If the rows of ${\bf A}$ are chosen generically in ${\mathbb R}^n$, then $\textrm{rank}\big({\bf M}({\cal I}, {\cal J}, {\mathbf P})\big) = 2k$ for $m \geq 2k.$ Since there are finitely many choices for ${\cal I}$, ${\cal J}$ and ${\mathbb P},$ a union bound argument extends the result to all possible choices of ${\bf P}$, and $k$ columns of ${\bf A}$. Then Lemma \ref{lem31} implies that there do not exist two distinct sparse vectors with no support overlap, whose $\ell_0$ norms are $\leq k$, that map to the same phaseless measurements.

\subsubsection{Full Support Overlap} \label{sec232}
Consider an index set ${\cal I} \subset [n]$ with cardinality $k$. Let ${\bf C} = {\bf A}_{\cal I}.$ Consider the $m \times 2k$ matrix
$${\bf M}({\cal I}, {\mathbf P}) = \big[{\bf C}, {\bf P}{\bf C}\big],$$
for some ${\bf P} \in {\cal P}({\mathbb R}).$ Suppose there exist distinct %(not including global phase difference as this would violate the definition of ${\cal P}({\mathbb R})$)
${\bf u}_1, {\bf u}_2 \in {\mathbb R}^k$ such that
\begin{equation} \label{eq2-proof1}
{\bf M}({\cal I}, {\mathbf P})   \left[ \begin{array}{cc}
{\bf u}_1\\
{\bf u}_2 \end{array} \right]\  = {\bf 0}.
\end{equation}

Let ${\cal R}_{1} = \{k: p_{kk} = 1\}$ and ${\cal R}_{-1} = \{k: p_{kk} = -1\}.$ We note neither set is empty by the definition of ${\cal P}({\mathbb R}).$ Let the cardinality of ${\cal R}_{1}$ be $l$, hence that of ${\cal R}_{-1}$ is $m - l$. Then $[{\bf u}_1^T, \: {\bf u}_2^T]^T$ is in ${\cal N}\big(({\bf M}({\cal I}, {\mathbf P}))^{({\cal R}_{1})}\big) \cap {\cal N}\big(({\bf M}({\cal I}, {\mathbf P}))^{({\cal R}_{-1})}\big).$

Suppose the the rows of ${\bf A}$ are chosen generically in ${\mathbb R}^n,$ and $\min(l, m-l) \geq k$. Then with probability 1, if $l \geq m - l$ (hence $l \geq k$), then
$${\cal N}\big(({\bf M}({\cal I}, {\mathbf P}))^{({\cal R}_{1})}\big) = \bigg\{ \left[ \begin{array}{cc}
{\bf u}_{1}\\
{\bf u}_{2}  \end{array} \right]\  \: : \: {\bf u}_1 = -{\bf u}_2 \bigg\},$$
which is not in ${\cal N}\big(({\bf M}({\cal I}, {\mathbf P}))^{({\cal R}_{-1})}\big)$ unless ${\bf u}_1 = {\bf u}_2 = {\bf 0}$.
Otherwise if $m-l \geq l$ (hence $m-l > k$), then
$${\cal N}\big(({\bf M}({\cal I}, {\mathbf P}))^{({\cal R}_{-1})}\big) = \bigg\{ \left[ \begin{array}{cc}
{\bf u}_{1}\\
{\bf u}_{2}  \end{array} \right]\  \: : \: {\bf u}_1 = {\bf u}_2 \bigg\},$$
which is not in ${\cal N}\big(({\bf M}({\cal I}, {\mathbf P}))^{({\cal R}_{1})}\big)$ unless ${\bf u}_1 = {\bf u}_2 = {\bf 0}$.
Thus ${\cal N}\big(({\bf M}({\cal I}, {\mathbf P}))^{({\cal R}_{1})}\big) \cap {\cal N}\big(({\bf M}({\cal I}, {\mathbf P}))^{({\cal R}_{-1})}\big) = \{{\bf 0}\},$ and thus ${\bf u}_1 = {\bf u}_2 = {\bf 0},$ leading to a contradiction. %Repeating the argument over all $2^m - 2$ elements of ${\cal P}({\mathbb R})$ and ${n \choose k}$ choices of ${\cal I}$ shows that with probability 1, there does not exist two distinct sparse vectors with full support overlap that both map to the same phaseless measurements as long as $m \geq 2k -1.$

%We define the error event
%\begin{align}
%E({\cal I}, {\mathbf P}) = \{{\bf M}({\cal I}, &{\mathbf P}) \textrm{ has phase-generalized} \nonumber \\
% 																	&\textrm{minimum distance } < 2k+1 \} \nonumber
%\end{align}
%
%Let $E_{\textrm{pgmd-s}}$ be the event that ${\bf M}({\cal I}, {\mathbf P})$ has phase-generalized minimum distance $\geq 2k+1$ for all ${\cal I}$ of cardinality $k$ and for all ${\bf P} \in {\cal P}({\mathbb R}).$ Then
%\begin{align}
%{\mathbb P}(E_{\textrm{pgmd-s}}) &= 1 - {\mathbb P}(\cup_{{\cal I}: \textrm{card}({\cal I}) = k} \cup_{{\bf P} \in {\cal P}({\mathbb R})} E({\cal I}, {\mathbf P})) \nonumber \\
%	&\geq 1 - \sum_{{\cal I}: \textrm{card}({\cal I}) = k} \sum_{{\bf P} \in {\cal P}({\mathbb R})} {\mathbb P}(E({\cal I}, {\mathbf P})) \nonumber \\
%	&= 1 - (2^m - 2){n \choose k} (0) = 1,  \label{pgmd-s}
%\end{align}
%since the cardinality of ${\cal P}({\mathbb R})$ is exactly $2^m - 2,$ and the number of indices ${\cal I}$ of cardinality $k$ is ${n \choose k},$ and
%where ${\mathbb P}$ is over the Gaussian ensemble.

Noting there are finitely many choices for ${\cal I}$ and ${\mathbb P},$ a union bound argument shows that there do not exist two distinct sparse vectors with full support overlap that both map to the same phaseless measurements as long as $m \geq 2k -1.$

We also note that the result for the full support overlap case can be generated from the results of \cite{Balan} for the non-sparse phase retrieval problem.

\subsubsection{Partial Support Overlap} \label{sec233}
Let ${\cal I}, {\cal J} \subset [n]$ be two index sets of cardinality $k$ such that ${\cal I} \cap {\cal J} \neq \emptyset$. Let $|{\cal I} \cap {\cal J}| = w.$ Then $|{\cal I} \backslash {\cal J}| = |{\cal J} \backslash {\cal I}| = k - w.$ Consider the $m \times 2k$ matrix
$${\bf M}({\cal I}, {\cal J}, {\mathbf P}) = \big[{\bf A}_{{\cal I} \backslash {\cal J}}, {\bf A}_{{\cal I} \cap {\cal J}}, {\bf P}{\bf A}_{{\cal J} \backslash {\cal I}}, {\bf P}{\bf A}_{{\cal I} \cap {\cal J}}\big].$$

As in Section \ref{sec232}, we let ${\cal R}_{1} = \{k: p_{kk} = 1\}$ and ${\cal R}_{-1} = \{k: p_{kk} = -1\},$ with $|{\cal R}_{1}| = l.$ Without loss of generality, we assume the first $l$ rows correspond to ${\cal R}_{1},$ and re-write ${\bf M}({\cal I}, {\cal J}, {\mathbf P})$ as
\begin{equation} \nonumber
{\bf M}({\cal I}, {\cal J}, {\mathbf P}) =
\left[ \begin{array}{cccc}
{\bf A}_{{\cal I} \backslash {\cal J}}^{({\cal R}_{1})} & {\bf A}_{{\cal I} \cap {\cal J}}^{({\cal R}_{1})} & {\bf A}_{{\cal J} \backslash {\cal I}}^{({\cal R}_{1})} & {\bf A}_{{\cal I} \cap {\cal J}}^{({\cal R}_{1})} \\
{\bf A}_{{\cal I} \backslash {\cal J}}^{({\cal R}_{-1})} & {\bf A}_{{\cal I} \cap {\cal J}}^{({\cal R}_{-1})} & -{\bf A}_{{\cal J} \backslash {\cal I}}^{({\cal R}_{-1})} & -{\bf A}_{{\cal I} \cap {\cal J}}^{({\cal R}_{-1})}
\end{array} \right]
\end{equation}
We will characterize the rank of this matrix, or equivalently that of
\begin{equation} \label{eq-overlap1}
{\bf M}' =
\left[ \begin{array}{cccc}
{\bf A}_{{\cal I} \backslash {\cal J}}^{({\cal R}_{1})} & {\bf A}_{{\cal I} \cap {\cal J}}^{({\cal R}_{1})} & {\bf A}_{{\cal J} \backslash {\cal I}}^{({\cal R}_{1})} & {\bf 0} \\
{\bf A}_{{\cal I} \backslash {\cal J}}^{({\cal R}_{-1})} & {\bf 0} & -{\bf A}_{{\cal J} \backslash {\cal I}}^{({\cal R}_{-1})} & {\bf A}_{{\cal I} \cap {\cal J}}^{({\cal R}_{-1})}
\end{array} \right]
\end{equation}

If $l \geq 2k-w$ or $m-l \geq 2k-w,$ then the first $l$ rows or the last $m-l$ rows are trivially full-rank, implying that ${\bf M}'$ and thus ${\bf M}({\cal I}, {\cal J}, {\mathbf P})$ is full-rank. Thus, we only consider $l < 2k-w$ and $m-l < 2k-w.$ This implies $2k-w > m-l \geq 2k-l,$ i.e. $w < l.$ Similarly, $w < m-l.$ We first consider the following sub-matrix from Equation (\ref{eq-overlap1}) (after re-arranging the columns):
$${\bf C} \triangleq [{\bf A}_{{\cal I} \cap {\cal J}}^{({\cal R}_{1})}, {\bf A}_{{\cal I} \backslash {\cal J}}^{({\cal R}_{1})} , {\bf P}{\bf A}_{{\cal J} \backslash {\cal I}}^{({\cal R}_{1})}].$$
This matrix is equivalent to
\begin{equation} \label{eq-overlap2}
\left[ \begin{array}{cc}
{\bf I}_{[w]} & {\bf 0} \\
{\bf 0} & {\bf C}'
\end{array} \right]
\end{equation}
where
$${\bf C}' = {\bf C}_{[2k-w] \backslash [w]}^{([l] \backslash [w])} - {\bf C}_{[w]}^{([l] \backslash [w])} \Big({\bf C}_{[w]}^{([w])}\Big)^{-1} {\bf C}_{[2k-w] \backslash [w]}^{([w])}.$$

\vspace{0.1cm}
A similar procedure can be followed for the sub-matrix (after re-arranging) $${\bf D} \triangleq [{\bf A}_{{\cal I} \cap {\cal J}}^{({\cal R}_{-1})}, {\bf A}_{{\cal I} \backslash {\cal J}}^{({\cal R}_{-1})},  -{\bf A}_{{\cal J} \backslash {\cal I}}^{({\cal R}_{-1})}]$$ to yield an equivalent matrix analogous to that in Equation (\ref{eq-overlap2}), where ${\bf D}'$ is defined analogously. Then ${\bf M}({\cal I}, {\cal J}, {\mathbf P})$ is equivalent to
\begin{equation} \nonumber
\left[ \begin{array}{cc}
{\bf I}_{[2w]} & {\bf 0} \\
{\bf 0} & {\bf B}
\end{array} \right]
\end{equation}
where
${\bf B} =
\left[ \begin{array}{c}
{\bf C}' \\
{\bf D}'
\end{array} \right].$ For a generic choice of rows for ${\bf A},$ the rank of ${\bf B}$ is $2k-2w$ and that of ${\bf M}({\cal I}, {\cal J}, {\mathbf P})$ is $2k$. With the union bound procedure, this is true for all ${\cal I}, {\cal J}$ and ${\bf P}.$ %if $m \geq 2k,$ then the phase-generalized minimum distance of a ${\bf A}$ whose rows are chosen generically in ${\mathbb R}^m$ is $2k+1.$

%Using the results of Section \ref{sec231}, if the entries of ${\bf A}$ are chosen i.i.d. from the standard normal distribution, this matrix has full rank with probability 1 if $m \geq 2k.$

\vspace{0.3cm}
Combining all the results from Sections \ref{sec231}, \ref{sec232} and \ref{sec233}, we have that if the rows of ${\bf A} \in {\mathbb R}^{m \times n}$ are chosen generically and $m \geq 2k$ then the phase-generalized minimum distance of ${\bf A}$ is $2k + 1.$ Based on Lemma \ref{lem31}, this implies that the $\ell_0$ minimization can uniquely recover every $k$-sparse ${\bf x}$ from phaseless measurements ${\bf y} = |{\bf Ax}|.$

%%%%%%%%%%%%%%%%%%%%
%%% Complex Case %%%
%%%%%%%%%%%%%%%%%%%%

\subsection{Proof for the Complex Case}

The union bound technique for the support selection used in Section \ref{sub:real-proof} does not extend to the complex case in a straightforward manner, since now there are infinitely many choices for ${\bf P}.$ Thus, we modify the proof technique by following and extending the proof technique of \cite{Balan}.

%\subsubsection{No Support Overlap} \label{sec331}
Let ${\bf A} \in {\bf C}^{m \times n}$ be a matrix whose rows are a generic choice of vectors in ${\mathbb C}^n$. %with i.i.d. entries. Then any subset of $k$ columns of ${\bf A}$ span a subspace corresponding to a uniform selection of an element of the Grassmannian manifold of $k$ planes in ${\mathbb C}^m$, $Gr(k,m)^{\mathbb C}$ \cite{Absil}. Furthermore 
We note that any $k \times k$ sub-matrix of ${\bf A}$ is invertible. Let ${\cal I}, {\cal J} \subset [n]$ be two index sets of cardinality $k$ each such that ${\cal I} \cap {\cal J} = \emptyset$. Let ${\cal V} = \textrm{span}({\bf a}_k : k \in {\cal I})$ and ${\cal W} = \textrm{span}({\bf a}_k : k \in {\cal J}).$ Suppose there are two vectors ${\bf x}_{\cal I} \in {\mathbb C}^k$ and ${\bf z}_{\cal J} \in {\mathbb C}^k$ such that
$${\bf y} = |{\bf A}_{\cal I}{\bf x}_{\cal I}| = |{\bf A}_{\cal J}{\bf z}_{\cal J}|.$$

In other words,
$${\bf r} = {\bf A}_{\cal I}{\bf x}_{\cal I} = {\bf P}{\bf A}_{\cal J}{\bf z}_{\cal J}$$
for ${\bf y} = |{\bf r}|$ and for some ${\bf P} \in {\cal P}({\mathbb C}).$ We re-write this set of equations as:
\begin{equation} \label{r_in_I}
{\bf r} = {\bf A}_{\cal I}({\bf A}_{\cal I}^{[k]})^{-1}{\bf A}_{\cal I}^{[k]}{\bf x}_{\cal I} \triangleq \left[ \begin{array}{cc}
{\bf I}\\
{\bf V} \end{array} \right]\ {\bf d},
\end{equation}
and
\begin{equation} \label{r_in_J}
{\bf r} = {\bf P} {\bf A}_{\cal J}({\bf A}_{\cal J}^{[k]})^{-1}{\bf A}_{\cal J}^{[k]}{\bf z}_{\cal J} \triangleq {\bf P} \left[ \begin{array}{cc}
{\bf I}\\
{\bf W} \end{array} \right]\ {\bf e},
\end{equation}
where ${\bf I}$ is the identity matrix. We note ${\bf d} = {\bf r}_{[k]}$ and ${\bf P}_{[k]}^{[k]}{\bf e} = {\bf r}_{[k]}.$

Since the optimization procedure in (\ref{opt-ell0}) is scale-invariant and global-phase-invariant, without loss of generality we can assume $r_1 = 1$\footnote{This can be achieved by dividing both sides by $r_1$, assuming $r_1 \neq 0$. Otherwise a permutation can be applied, prior to the formulation in Equations (\ref{r_in_I}) and (\ref{r_in_J}), to the rows of ${\bf A}$ and ${\bf y}$ to make sure $r_1 \neq 0.$}. Similarly, since uniqueness is up to a global phase, we can assume $p_{11} = 1$ by absorbing the phase term into ${\bf e}$. Finally, without loss of generality we can assume $r_1, r_2, \dots, r_k$ are non-zero. We first note that at least $k$ elements in ${\bf r}$ are non-zero, since unique sparse recovery requires at least $2k$ linearly independent measurements \cite{Candes-Romberg-Tao-fft}. If only $<k$ of the measurements in ${\bf r}$ are non-zero, the $k$-sparse vector being measured lies in the null-space of a matrix that has $>k$ rows (corresponding to the 0 entries in ${\bf r}$). This would imply that these measurements are linearly dependent. Following a permutation of the rows, without loss of generality, we can assume that the first $k$ elements of ${\bf r}$ are non-zero.

We say two distinct $k$-planes $({\cal V}, {\cal W})$, both in ${\mathbb C}^m$ satisfy the distinct-phaseless-mapping property if there are non-parallel distinct vectors ${\bf r} \in {\cal V}$ and ${\bf w} \in {\cal W}$ such that $|r_k| = |w_k|$ for all $k$.

From Equation (\ref{r_in_I}), for ${\bf r} \in {\cal V}$
$$r_i = \sum_{j = 1}^k r_j v_{ij},$$
for $i>k.$ Similarly from Equation (\ref{r_in_J}), for ${\bf w} \in {\cal W}$
$$w_i = \sum_{j = 1}^k r_j / p_{jj} w_{ij}.$$

Hence if $({\cal V}, {\cal W})$ satisfies the distinct-phaseless-mapping property, there exists $p_{jj} \in {\mathbb T}({\mathbb C})$ for $j \in \{1, \dots,k\}$ and $r_2, \dots, r_k \in {\mathbb C}$ (since $r_1 = 1$) such that
\begin{equation} \label{eq-constraints}
\bigg|\sum_{j = 1}^k r_j v_{ij}\bigg| = \bigg|\sum_{j = 1}^k (r_j / p_{jj}) w_{ij}\bigg|
\end{equation}
for all $k < i \leq m.$

We consider the following variety of all tuples
\begin{equation} \label{eq-variety}
\big(({\cal V}, {\cal W}), r_2, \dots, r_k,  p_{22}, \dots, p_{kk}\big).
\end{equation}

This is locally isomorphic to ${\mathbb C}^{2k(m-k)} \times ({\mathbb C} \backslash {0})^{k-1} \times ({\mathbb T}({\mathbb C}))^{k-1},$ and Equations (\ref{r_in_I}) and (\ref{r_in_J}) characterize its dimensionality \cite{Balan}, corresponding to a real dimension of $4k(m-k) + 3k - 3.$ Next, we note that the set of 2-tuples in $Gr(k,m)^{\mathbb C} \times Gr(k,m)^{\mathbb C}$ that satisfy the distinct-phaseless-mapping property is the image of the projection onto the first factor of the variety in (\ref{eq-variety}) subject to the $m-k$ equations in (\ref{eq-constraints}) \cite{Balan}.

Similar to Lemma 3.2 in \cite{Balan}, these equations are non-degenerate for any choice of non-zero $\{r_2, \dots, r_k\},$ and $\{p_{22}, \dots, p_{kk}\}$. Furthermore, the variables $\{v_{i1}, \dots, v_{ik}\}$ and $\{w_{i1}, \dots, w_{ik}\}$ appear in exactly one equation, and thus for any fixed non-zero $\{r_2, \dots, r_k\}$ and $\{p_{22}, \dots, p_{kk}\},$ these define a subspace of ${\mathbb C}^{k(m-k)} \times {\mathbb C}^{k(m-k)}$ of real codimension $\geq m-k$. This is true for all choices, implying that the equations are independent \cite{Balan}. Therefore, the set of 2-tuples in $Gr(k,m)^{\mathbb C} \times Gr(k,m)^{\mathbb C}$ that satisfy the distinct-phaseless-mapping property have real dimension $\leq 4k(m-k) + 3k - 3 - (m -k) = 4k(m-k) + 4k-3 - m.$

If $m > 4k-3,$ then this set of 2-tuples cannot be the whole of $Gr(k,m)^{\mathbb C} \times Gr(k,m)^{\mathbb C}.$ In fact, if $m > 4k-3,$ and the measurements are generic, then the set of 2-tuples in $Gr(k,m)^{\mathbb C} \times Gr(k,m)^{\mathbb C}$ that satisfy the distinct-phaseless-mapping property has measure 0. Since there are finitely many choices for ${\cal I}$ and ${\cal J}$, a union bound argument extends the result to all possible choices of $k$ columns of ${\bf A}$. Thus, if the rows of ${\bf A} \in {\mathbb C}^{m \times n}$ are a generic choice of vectors, and $m \geq 4k-2$, no two sparse vectors with $\ell_0$ norm $\leq k$ map to the same phaseless measurements acquired using ${\bf A}$. Hence $\ell_0$ minimization can uniquely recover every $k$-sparse ${\bf x} \in {\mathbb C}^{n}$ from phaseless measurements ${\bf y} = |{\bf Ax}|.$

\section{Discussion and Conclusion} \label{sec:conc}

In this note, we considered the sparse phase retrieval problem in both the real and complex settings. For the real case, we introduced the concept of phase-generalized minimum distance of a matrix, and proved a necessary and sufficient condition for unique sparse phase retrieval up to global phase. We then showed using a combinatorial approach that a matrix, ${\bf A} \in {\mathbb R}^{m \times n},$ whose entries are chosen i.i.d. from a continuous distribution with $m \geq 2k$ can be used in conjunction with $\ell_0$ minimization to uniquely recover every $k$-sparse ${\bf x}$ from phaseless measurements ${\bf y} = |{\bf Ax}|.$ For the complex case, we showed that a sufficient condition is $m \geq 4k -2$.

%These bounds improve on the existing bounds for sparse phase retrieval by almost a factor of 2. The smallest number of sufficient number of measurements in the existing literature for the real case is $4k-1$ \cite{Voroninski, Eldar-pr}, whereas it is $8k-2$ for the complex case \cite{Voroninski}. The main difference betweeen our results and that of the results of \cite{Voroninski} is that the latter utilizes the existing bounds for non-sparse signals from \cite{Balan} and enforces this to apply over the union of the supports of two $k$-sparse signals, thus creating a signal of dimension $2k$. In our work, we derived the bounds for the sparse phase retrieval without building on the non-sparse phase retrieval procedure. We also note that the focus of \cite{Voroninski} is on practical algorithms for sparse phase retrieval, and not on sufficiency results.

For the real case, the necessary and sufficient number of measurements for sparse phase retrieval matches those when the phase information is present, as in compressed sensing. Thus, the $\pm$ phase does not seem to be as valuable for real signals. % measurements does not change the behavior of the $\ell_0$ minimization-based reconstruction, though implementation of such a reconstruction algorithm may not computationally feasible.
For the complex case, we note that the number of necessary measurements for uniqueness in the non-sparse phase retrieval problem is $(4 + o(1)) n$ \cite{Bandeira}. Thus, restricting to the full support overlap case in the sparse phase retrieval problem, this implies $(4+o(1)) k$ measurements are necessary. Hence, in the complex case, our sufficient condition matches the order of the necessary condition.
%We have just recently
During the submission of this work, we have learned of \cite{Wang} with a seemingly different approach that produces some of the bounds given in this paper.
We also note a very recent significant contribution to the (non-sparse) phase retrieval literature \cite{Conca}, which showed that $m \geq 4n-4$ measurements are sufficient for unique phase retrieval over ${\mathbb C}$. Their techniques may potentially be extended to the sparse phase retrieval problem to improve our bounds, although we did not explore this approach following the publication of \cite{Conca}.

%Another observation that applies to both the real and complex cases is that the number of sufficient measurements is just 1 more than the best-known bounds for the non-sparse case (or equivalently the sparse case with full support overlap). Thus, not knowing the location of the support of the sparse signals seems to require only extra measurement for the $\ell_0$ minimization to reconstruct the sparse signal successfully. %Thus, it is hypothesized that $4k-3$ measurements may be sufficient in the complex case for unique sparse phase retrieval based on the hypothesis that $4n-4$ sufficient measurements are sufficient in the non-sparse case \cite{Bandeira}, though we were not able to prove this result in the current work.

In this paper, we only considered the noiseless sparse phase retrieval problem. In light of necessary and sufficient information theoretic conditions from the compressed sensing literature \cite{Wang-cs, Rad}, the linear scaling with respect to $k$ is not going to hold for sparse phase retrieval in general, when the measurements are corrupted with Gaussian noise. We have not performed numerical simulations to study the scaling of the number of sufficient measurements for sparse phase retrieval in the presence of noise, due to the absence of practical algorithms for optimal recovery, which would necessitate exhaustive search over all subspaces.

Future research may focus on finding efficient algorithms for sparse phase retrieval and developing the necessary conditions for sparse phase retrieval over ${\mathbb C}.$% and to characterize the stability trade-offs in the presence of noise and other imperfections.

\end{document}